\newtheorem{Theorem}{Theorem}
\newtheorem{Lemma}{Lemma}
\newtheorem{Corollary}{Corollary}
\newtheorem{Remark}{Remark}
\DeclareMathOperator*{\argmax}{argmax}
\DeclareMathOperator*{\Bern}{Bern}
\DeclareMathOperator*{\Gammadis}{Gamma}
\DeclareMathOperator*{\Exp}{\textit{Exp}}
\newcommand{\bad}{{\mathcal B}}
\DeclareMathOperator*{\Gdiv}{D_g}
\DeclareMathOperator*{\Ediv}{D_e}
\DeclareMathOperator*{\diag}{diag}
\def\Dist{P}
\def\Prob{{\mathbb P}}
\def\muvec{{\boldsymbol\mu}}
\def\sigvec{{\boldsymbol\Sigma}}
\def\lamvec{\boldsymbol\lambda}
\def\tetvec{\boldsymbol\theta}
\def\p{{\boldsymbol p}}
\def\w{{\boldsymbol w }}
\def\W{{\boldsymbol W }}
\def\mutilde{\Tilde{{\boldsymbol \mu}}}
\def\muhat{\hat{{\boldsymbol \mu}}}
\def\Sigmatilde{\Tilde{{\boldsymbol \Sigma}}}
\def\Sigmahat{\hat{{\boldsymbol \Sigma}}}
\def\rhotilde{\Tilde{{ \rho}}}
\def\rhohat{\hat{{ \rho}}}
\def\lambdatilde{\Tilde{{\boldsymbol \lambda}}}
\def\lambdahat{\hat{{\boldsymbol \lambda}}}
\def\taustar{{\tau}^*}
\newcommand{\sigmascalar}[2]{{\Sigma_{#1#2}}}
\newcommand{\muscalar}[2]{{\mu_{#1#2}}}
\newcommand{\lambdascalar}[2]{{\lambda_{#1#2}}}
\begin{document}
\title{Community Detection: Exact Recovery in Weighted Graphs} 

\author{%
  \IEEEauthorblockN{Mohammad~Esmaeili and Aria~Nosratinia}
  \IEEEauthorblockA{Department of Electrical and Computer Engineering, The University of Texas at Dallas\\
  	Email: \{Esmaeili, Aria\}@utdallas.edu}
}

\maketitle

\begin{abstract}
In community detection, the exact recovery of communities (clusters) has been mainly investigated under the general stochastic block model with edges drawn from Bernoulli distributions. This paper considers the exact recovery of communities in a complete graph in which the graph edges are drawn from either a set of Gaussian distributions with community-dependent means and variances, or a set of exponential distributions with community-dependent means. For each case, we introduce a new semi-metric that describes sufficient and necessary conditions of exact recovery. The necessary and sufficient conditions are asymptotically tight. The analysis is also extended to incomplete, fully connected weighted graphs.
\end{abstract}


\section{Introduction}

A main thrust of community detection literature has been on the stochastic block model with the graph edges drawn from Bernoulli distributions~\cite{abbe2015community, holland1983stochastic, hajek2015exact, saade2015spectral, esmaeili2019community, fronczak2013exponential, 9174105}, under various recovery metrics~\cite{decelle2011inference, mossel2015reconstruction, massoulie2014community, mossel2018proof, saad2016exit, mossel2016density, yun2014community, abbe2017community, abbe2015exact, mossel2015consistency}, and algorithms~\cite{chen2016statistical, mossel2014belief, 8682223, amini2018semidefinite, hajek2016achieving}. Exact recovery threshold of general stochastic block model was derived in~\cite{abbe2015community} by approximating Binomial distributions by Poisson distributions and utilizing the Chernoff-Hellinger divergence. 

While binary edges represent several practical applications and are analytically more tractable, there are many real-world graphs in which edge weights are better modelled by continuous values. For example,
brain networks are intrinsically weighted, reflecting a continuous distribution of connectivity strengths between different brain regions~\cite{nicolini2017community}. Applications in communications, e.g., data forwarding in Delay Tolerant Networks (DTN) and worm containment in Online Social Networks (OSN)~\cite{lu2014algorithms} also are well represented with continuous-valued weighted graphs. 
The edges of social media networks can be of different types, such as simple, weighted, directed and multi-way (i.e. connecting more than two entities) depending on the network creation process~\cite{papadopoulos2012community}.
In biology, community detection is applied on weighted gene networks for revealing cancers and anomalous tissues~\cite{cantini2015detection}.
For these applications, the stochastic block model with continuous probability density functions such as Gaussian distributions is the more appropriate choice.

For community detection from continuous-valued weighted graphs, only a few information-theoretic results are known, mostly under Gaussian distributions.
In~\cite{7820122},  weak recovery and exact recovery of a hidden community is investigated while the edges are drawn from two different Gaussian distributions. This is a symmetric version of the submatrix localization (also known as noisy biclustering) problem~\cite{refId0, kolar2011minimax, chen2016statistical}. 
In submatrix localization problem, the task is to detect a small block (blocks) with atypical mean within a large Gaussian matrix.
Binary symmetric communities with Gaussian distributions are investigated in~\cite{wu2018statistical}.
The problem of detecting a sparse principal component based on a sample from a multivariate Gaussian distribution in high dimensions is considered in~\cite{berthet2013optimal}.

Community detection in a more general setting similar to~\cite{abbe2015community} and under well-known continuous probability density functions is an interesting and challenging problem from both algorithmic and information-theoretic perspectives. This paper investigates this problem and obtains information limits for exact recovery of communities. 

The contributions of this paper are as follows. First, we analyze the exact recovery of node labels in a complete graph in which the edge weights are drawn from either a set of Gaussian or a set of Exponential distributions whose parameters are determined by the latent labels.  Under this model, sufficient and necessary conditions for exact recovery are derived.
Second, we extend the results to fully-connected but incomplete weighted graphs, by showing that under some conditions the inter and intra community probability distributions can be approximated by Gaussian distributions. 
The contributions of this paper and techniques that are used here  are widely applicable for other high-dimensional inference problems such as sparse PCA, Gaussian mixture clustering, tensor PCA, and other community detection problems with continuous distributions. 

\section{System Model \& Main Results}

Notation: $\Prob$  indicates the probability operator and $\Dist$ a probability distribution which is identified by the choice of its variables whenever there is no confusion. A matrix $\boldsymbol A$ has columns $\boldsymbol A_{i}$ and elements $A_{ij}$.
$\mathbb{R}$ is the set of real numbers, $\mathbb{R}_{+}$ is the set of non-negative real numbers, and $\mathbb{R}_{++}$ is the set of positive real numbers. 


We start by considering a complete graph with $n$ nodes. The graph nodes are divided into $K$ communities, where $K$ is finite.
Let $ \boldsymbol Q$ be an $K\times K$ matrix with entries $Q_{ij}$.
A node from community $i$ is connected to a node in community $j$ by a weighted edge drawn from distribution $Q_{ij}$.
In this paper, $Q_{ij}$ belongs to either a set of Gaussian or a set of Exponential distributions.
Let $\p \triangleq  [p_1, p_2, \cdots, p_K]$, where $p_i$ denote the size of community $i$. It is assumed that the size of each community is proportional to $n$, i.e., $p_i = \left \lfloor  \rho_i n \right \rfloor$, where $\rho_i \in (0,1)$ and $\sum_{i=1}^{K} \rho_i = 1$. 

When $Q_{ij}$ belongs to the set of Gaussian distributions, $Q_{ij} = \mathcal{N}(\bar{\mu}_{ij}, \bar{\sigma}_{ij}^2)$.
For this case, we define  matrices $\muvec$ and $\sigvec $ with entries $\mu_{ij} = p_i \bar{\mu}_{ij}$ and $\Sigma_{ij} = p_i \bar{\sigma}_{ij}^2$, respectively. 
When $Q_{ij}$ belongs to the set of Exponential distributions, $Q_{ij} = \Exp(\Tilde{\lambda}_{ij})$.
For this case, we define  matrix $\lamvec $ with entries $\lambda_{ij} = \Tilde{\lambda}_{ij}$.

Under the model with Gaussian distributions, assume that each edge is removed by a Bernoulli random variable. Then an edge from a node in community $i$ to a node in community $j$ is removed with probability $1-\theta_{ij}$. To have a fully connected graph, we consider a regime in which $\theta_{ij} = c_{ij} \frac{\log n}{n}$, where $c_{ij}$ is a constant. For this case,  we define matrix $\tetvec $ with entries $\theta_{ij}$. In this paper, this model is called  incomplete but fully connected weighted graph with Gaussian distributions.


Now, we summarize the main results  of this paper.
For convenience define the following semi-metrics:
\begin{align*}
    \Gdiv(\mutilde, \muhat, \Sigmatilde, \Sigmahat) \triangleq \max_{t \in [0,1]}   &\sum_{k=1}^{K} \Bigg\{  \frac{\Tilde{\mu}_{k}^{2}\hat{\Sigma}_{k}t+\hat{\mu}_{k}^{2} \Tilde{\Sigma}_{k} (1-t)}{2 \Tilde{\Sigma}_{k} \hat{\Sigma}_{k}} \\
    &-\frac{\left[ \Tilde{\mu}_{k}\hat{\Sigma}_{k}t+\hat{\mu}_{k}\Tilde{\Sigma}_{k} (1-t) \right ]^{2}}{2 \Tilde{\Sigma}_{k} \hat{\Sigma}_{k} \left[\hat{\Sigma}_{k}t+\Tilde{\Sigma}_{k} (1-t) \right ]} \\
    &-\frac{1}{2}\log \left(\frac{\Tilde{\Sigma}_{k}^{1-t} \hat{\Sigma}_{k}^{t} }{\Tilde{\Sigma}_{k}(1-t) +\hat{\Sigma}_{k}t} \right ) \Bigg \} ,
\end{align*}
\begin{align*}
    \Ediv(\lambdatilde, \lambdahat, \p) \triangleq& \max_{t \in [0,1]} \sum_{k=1}^{K} p_k \log \left( \frac{(1-t)\Tilde{\lambda}_{k}+t\hat{\lambda}_{k}}{\Tilde{\lambda}_{k}^{1-t} \hat{\lambda}_{k}^{t}} \right) .
\end{align*}
\begin{Theorem}
\label{Theorem-1}
With Gaussian distributions, 
\begin{itemize}
    \item 
    when $\Gdiv(\muvec_{i}, \muvec_{j}, \sigvec_{i}, \sigvec_{j}) = \omega(\log n)$ exact recovery of node labels is possible if and only if
    \begin{align*}
    \min_{i,j, i\neq j} \Gdiv(\muvec_{i}, \muvec_{j}, \sigvec_{i}, \sigvec_{j}) > 0 .
    \end{align*}
    
    \item 
    when $\Gdiv(\muvec_{i}, \muvec_{j}, \sigvec_{i}, \sigvec_{j}) = O(\log n)$ exact recovery of node labels is possible if and only if
    \begin{align*}
    \min_{i,j, i\neq j} \lim_{n \rightarrow \infty} \frac{\Gdiv(\muvec_{i}, \muvec_{j}, \sigvec_{i}, \sigvec_{j})}{\log n}> 1  .
    \end{align*}
\end{itemize}
\end{Theorem}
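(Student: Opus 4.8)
The plan is to follow the classical two-pronged structure used for exact recovery results in the stochastic block model (as in \cite{abbe2015community, hajek2015exact}): an achievability (sufficiency) argument based on analyzing the maximum-likelihood estimator, and a converse (necessity) argument based on a genie-aided hypothesis test. In both halves, the quantity $\Gdiv(\muvec_i,\muvec_j,\sigvec_i,\sigvec_j)$ will emerge as the Chernoff information governing the probability that two nodes from different communities are confused. The optimization over $t\in[0,1]$ in the definition of $\Gdiv$ is exactly the Chernoff exponent $\max_t(-\log\int Q_i^{1-t}Q_j^{t})$ specialized to products of Gaussians; the three terms inside the sum are, respectively, the mean-mismatch contribution, the cross-term correction, and the variance-mismatch (log-determinant) contribution from the Gaussian Rényi divergence. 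I would first carry out this calculation in a lemma: for a single pair of nodes $u$ (truly in community $i$) and $v$ (truly in community $j$), the log-likelihood ratio of the $n-O(1)$ independent edge weights incident to them decomposes into a sum over the $K$ communities, and a Chernoff/Bahadur-type bound gives that the genie-aided misclassification probability for this pair behaves like $\exp(-\Gdiv(\muvec_i,\muvec_j,\sigvec_i,\sigvec_j)(1+o(1)))$.

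For achievability I would argue that the ML estimator succeeds with high probability whenever no single node is misclassified by the genie. By a union bound over the at most $n^2$ ordered pairs of nodes and $K$ finite, the probability that some node is misclassified is at most $n^2\max_{i\neq j}\exp(-\Gdiv(\muvec_i,\muvec_j,\sigvec_i,\sigvec_j)(1+o(1)))$. This vanishes precisely when $\min_{i\neq j}\Gdiv > (2+\epsilon)\log n$ in the $O(\log n)$ regime, i.e.\ when $\liminf \Gdiv/\log n > 2$—wait, the theorem states the threshold as $>1$, so the correct accounting must be that the relevant union bound is over $n$ nodes (each node's label, not each pair), giving $n\exp(-\Gdiv(1+o(1)))\to 0$ iff $\liminf\Gdiv/\log n>1$; and in the $\omega(\log n)$ regime any strictly positive minimum suffices since $\Gdiv/\log n\to\infty$. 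The standard subtlety here—showing that a union bound over single-node relabelings, rather than arbitrary subsets, is tight—is handled as in \cite{abbe2015exact}: one shows the dominant error event is a single node being attracted to a wrong community, and larger deviations are super-exponentially rarer. I would also need to verify that conditioning on the true partition, the empirical community sizes $p_k$ concentrate, so that the divergence with random sizes matches the deterministic $\muvec,\sigvec$ built from $p_i=\lfloor\rho_i n\rfloor$.

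For the converse I would lower-bound the error probability of \emph{any} estimator by that of the genie-aided estimator that knows all labels except one node $v$, and must decide between $v\in i$ and $v\in j$ for the worst pair $(i,j)$. By a change-of-measure / second-moment argument (or directly, since there are $\Theta(n)$ candidate nodes and the relevant events are nearly independent), if $\liminf\Gdiv/\log n<1$ then with probability bounded away from zero at least one such node is misclassified, so exact recovery fails; and if the minimum divergence is not strictly positive in the $\omega(\log n)$ regime, a single pair already has non-vanishing confusion probability. The main obstacle I anticipate is the Gaussian Chernoff-information computation with \emph{both} means and variances differing across communities: unlike the Bernoulli/Poisson case the per-coordinate Rényi divergence is not symmetric and the optimizing $t$ has no closed form, so I must show the variational expression in the definition of $\Gdiv$ is the exact exponent (matching upper and lower bounds on the tail of the sum of the relevant log-likelihood-ratio terms), including controlling the $o(1)$ corrections uniformly. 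A secondary technical point is justifying the claim, needed only for the $O(\log n)$ regime, that the pre-exponential polynomial factors ($n$ from the union bound, and the $1/\sqrt{\cdot}$ from a local CLT in the lower bound) do not affect the threshold—this is why the condition is a strict inequality on the limit of $\Gdiv/\log n$.
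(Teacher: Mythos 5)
Your proposal follows essentially the same route as the paper: a genie-aided per-node hypothesis test whose pairwise confusion probability is sandwiched between $e^{-\Gdiv+c_2}$ and $e^{-\Gdiv+c_1}$ via a two-sided Chernoff-type bound on $\int\min\{\rhotilde f_i,\rhohat f_j\}\,\text{d}\w$ (the paper's Lemma~\ref{lemma: Gaussian-div}, proved by the tilted-density factorization $\min\{f_i,f_j\}=g(t)\min\{g_1(t),g_2(t)\}$ and a mean-value argument for the matching lower bound), followed by the union bound over the $n$ nodes that produces the threshold $\Gdiv/\log n>1$. If anything, you are more explicit than the paper about the reduction from global exact recovery to the single-node test and about why the converse needs near-independence of the $\Theta(n)$ error events, both of which the paper leaves implicit.
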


\begin{Theorem}
\label{Theorem-2}
With Exponential distributions,
\begin{itemize}
    \item 
    when $\Ediv(\lamvec_{i}, \lamvec_{j}, \p) = \omega(\log n)$, exact recovery of node labels is possible if and only if
    \begin{align*}
    \min_{i,j, i\neq j} \Ediv(\lamvec_{i}, \lamvec_{j}, \p) > 0 .
    \end{align*}
    
    \item 
    when $\Ediv(\lamvec_{i}, \lamvec_{j}, \p)) = O(\log n)$, exact recovery of node labels is possible if and only if
    \begin{align*}
    \min_{i,j, i\neq j} \lim_{n \rightarrow \infty} \frac{\Ediv(\lamvec_{i}, \lamvec_{j}, \p)}{\log n}> 1  .
    \end{align*}
\end{itemize}
\end{Theorem}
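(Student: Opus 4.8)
The plan is to mirror the classical two-part argument for exact recovery in the stochastic block model: an achievability (sufficiency) direction via analysis of the maximum-likelihood estimator, and a converse (necessity) direction via a genie-aided argument together with a second-moment/union bound lower bound. Throughout, the key object is the pairwise error event in which a single node is misclassified, and the quantity $\Ediv(\lamvec_{i},\lamvec_{j},\p)$ will emerge as the exponential rate governing the probability of such an event. First I would set up the ML estimator for the label assignment: given the observed weighted adjacency matrix with entries $A_{uv}\sim\Exp(\tilde\lambda_{\sigma(u)\sigma(v)})$, the log-likelihood of a candidate labeling decomposes additively over edges, so comparing the true labeling $\sigma$ against a labeling $\sigma'$ that differs only by moving one node $v$ from community $i$ to community $j$ reduces to a sum of independent terms $\sum_{u} \log\frac{\tilde\lambda_{i\sigma(u)}}{\tilde\lambda_{j\sigma(u)}} + A_{vu}(\tilde\lambda_{j\sigma(u)} - \tilde\lambda_{i\sigma(u)})$.

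The core technical step is a Chernoff bound on this sum. For each fixed $t\in[0,1]$, the moment generating function of an $\Exp(\lambda)$ random variable evaluated at the relevant exponent produces exactly the term $\log\frac{(1-t)\tilde\lambda_k + t\hat\lambda_k}{\tilde\lambda_k^{1-t}\hat\lambda_k^{t}}$ after collecting contributions from the $p_k$ neighbors in community $k$; optimizing over $t$ then yields the rate $\Ediv(\lamvec_{i},\lamvec_{j},\p)$. Thus $\Prob(\text{node } v \text{ misclassified toward } j) \le e^{-\Ediv(\lamvec_{i},\lamvec_{j},\p)(1+o(1))}$. Taking a union bound over all $n$ nodes and all pairs $i\neq j$ (finitely many), the failure probability vanishes provided $\Ediv(\lamvec_{i},\lamvec_{j},\p) - \log n \to \infty$ for every pair; this is automatic when $\Ediv = \omega(\log n)$ and $\min_{i\neq j}\Ediv > 0$, and it is implied by $\liminf \Ediv/\log n > 1$ when $\Ediv = O(\log n)$. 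For the converse, I would use a genie-aided argument: reveal all labels except that of node $v$, and note that any estimator does no better than the local ML test; the error probability of that test is lower bounded, via a reverse-Chernoff (or a second-moment/Paley–Zygmund style) estimate, by $e^{-\Ediv(\lamvec_{i},\lamvec_{j},\p)(1+o(1))}$ up to polynomial factors. Since the $n/K$ nodes in each community give (nearly) independent trials, the probability that at least one is misclassified tends to $1$ whenever $\Ediv - \log n \not\to \infty$; combined with the fact that the divergence being zero or negative already forbids recovery in the $\omega(\log n)$ regime, this establishes necessity.

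The main obstacle I anticipate is making the reverse-Chernoff lower bound in the converse sufficiently sharp — specifically, showing that the polynomial pre-factor in $\Prob(\text{misclassify})\gtrsim n^{-1-o(1)}$ does not degrade the threshold, and handling the weak dependence between the misclassification events of different nodes (they share edges). The standard remedy is to restrict attention to a linear-sized subset of nodes whose pairwise-shared edges are negligible, argue near-independence on that subset, and invoke the second moment method; one must also verify that the $\Exp$ tail is light enough that the dominant contribution to the error event comes from the "typical" large-deviation configuration rather than a single heavy edge, which is where the exponential family structure and finiteness of $K$ are used. A secondary technical point is uniform control of the optimizing $t^\star$ and of the $o(1)$ terms as $n\to\infty$, which requires the $\tilde\lambda_{ij}$ to stay bounded away from $0$ and $\infty$ (or a mild regularity assumption to that effect); I would state this explicitly. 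Finally, I would remark that the Gaussian case (Theorem~\ref{Theorem-1}) follows the identical template with the $\Exp$ moment generating function replaced by the Gaussian one, which is precisely what produces the quadratic-plus-log form of $\Gdiv$.
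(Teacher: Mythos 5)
Your proposal is correct in substance, and its technical core---the Chernoff-type optimization over $t\in[0,1]$ of the exponential-family moment generating function, which produces $\Ediv(\lamvec_{i},\lamvec_{j},\p)$ as the exponent of the pairwise misclassification probability---is the same computation the paper performs, so this is essentially the same approach in a different packaging. The paper reduces exact recovery to a genie-aided $K$-ary MAP test at a single node, works with the sufficient statistic $\W$ (edge-weight sums to each community, which have independent Gamma components), and sandwiches the pairwise Bayes error $\int_{\mathbb{R}^{K}_{+}}\min\{\rhotilde f(\w;\p,\lambdatilde),\rhohat f(\w;\p,\lambdahat)\}\,\text{d}\w$ between $e^{-\Ediv(\lambdatilde,\lambdahat,\p)+c_2}$ and $e^{-\Ediv(\lambdatilde,\lambdahat,\p)+c_1}$ (Lemma~\ref{lemma: Exponential-div}); the upper bound is exactly your Chernoff bound in integral form via $\min\{f,g\}\le f^{t}g^{1-t}$, while the lower bound is obtained by a mean-value-theorem evaluation of the same integral rather than your reverse-Chernoff/Paley--Zygmund route. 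Where you genuinely diverge is in making explicit the global steps that the paper leaves implicit: the union bound over the $n$ nodes for achievability, the near-independence/second-moment argument needed to turn the per-node error lower bound into a converse (the paper's displayed $P_e$ bounds concern a single node only, and the threshold $\Ediv/\log n>1$ is read off by balancing the per-node exponent against $n$), and the regularity of the $\tilde{\lambda}_{ij}$ needed for uniform control of $t^{*}$. These additions strengthen rather than alter the argument; your plan is, if anything, the more complete of the two.
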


\begin{Remark}
For both the Gaussian and the Exponential cases, when the related semi-metric is $\omega(\log n)$, the exact recovery condition is equivalent to  
\[
\boldsymbol Q_i \neq \boldsymbol Q_j   \qquad \forall i\neq j . 
\]
\end{Remark}

\begin{Corollary}
For a fully connected weighted but {\em incomplete} graph whose edge weights are Gaussian distributed, exact recovery of node labels is possible if and only if
\begin{align*}
\min_{i,j, i\neq j} \lim_{n \rightarrow \infty} \frac{\Gdiv(\muvec_{i}, \muvec_{j}, \sigvec_{i}, \sigvec_{j})}{\log n}> 1  ,
\end{align*}
where 
\begin{alignat*}{2}
    & \mu_{ij} = p_i \bar{\mu}_{ij} \theta_{ij} \qquad &\forall i,j ,\\
    &\Sigma_{ij} = p_i \theta_{ij} [\bar{\sigma}_{ij}^2+(1-\theta_{ij})\bar{\mu}_{ij}^2]  \qquad &\forall i,j.
\end{alignat*}
\end{Corollary}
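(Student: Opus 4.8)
\noindent\emph{Proof strategy.} The plan is to reduce the incomplete Gaussian model to a \emph{complete} Gaussian model with the effective aggregate parameters $\mu_{ij}=p_i\bar\mu_{ij}\theta_{ij}$ and $\Sigma_{ij}=p_i\theta_{ij}[\bar\sigma_{ij}^2+(1-\theta_{ij})\bar\mu_{ij}^2]$, and then to invoke Theorem~\ref{Theorem-1}. First, observe that since $\theta_{ij}=c_{ij}\tfrac{\log n}{n}$, $p_i=\lfloor\rho_i n\rfloor$, and $\bar\mu_{ij},\bar\sigma_{ij},c_{ij}$ are fixed, one has $\mu_{ij}=O(\log n)$ and $\Sigma_{ij}=O(\log n)$; substituting these into the definition of $\Gdiv$ (whose logarithmic term is then $O(1)$) shows $\Gdiv(\muvec_i,\muvec_j,\sigvec_i,\sigvec_j)=O(\log n)$. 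Therefore only the second regime of Theorem~\ref{Theorem-1} can arise, and its condition is precisely $\min_{i\neq j}\lim_{n\to\infty}\Gdiv(\muvec_i,\muvec_j,\sigvec_i,\sigvec_j)/\log n>1$, which is the claimed statement; it remains to justify the reduction.

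In the analysis underlying Theorem~\ref{Theorem-1}, whether a node $v$, truly in community $C_i$, is correctly labeled by the maximum-likelihood estimator rather than reassigned to some $C_j$ is decided by a likelihood-ratio statistic that splits into contributions from each community $C_k$, each contribution being a fixed function of the aggregated edge observations between $v$ and $C_k$; in the complete Gaussian model these aggregates are exactly Gaussian with parameters $\muvec,\sigvec$. For the incomplete model each such observation equals $B\,W$ with $B\sim\Bern(\theta_{ij})$ independent of $W\sim\mathcal N(\bar\mu_{ij},\bar\sigma_{ij}^2)$, so a direct computation gives that the first-order aggregate $S=\sum B\,W$ over a community has $\mathbb E[S]=\mu_{ij}$ and $\mathrm{Var}(S)=\Sigma_{ij}$ as defined in the statement, the extra term $(1-\theta_{ij})\bar\mu_{ij}^2$ in $\Sigma_{ij}$ arising from the randomness of which edges survive. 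Moreover $S$ is a Binomial mixture of Gaussians whose number of surviving edges is $\Bin(\cdot,\theta_{ij})$ with mean $\Theta(\log n)\to\infty$ and concentrates, so $S$ is asymptotically Gaussian; the same holds for the higher-order aggregates that enter the likelihood ratio.

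The main obstacle will be that exact-recovery thresholds are governed by error \emph{exponents}, so distributional convergence of the aggregates is not enough: one must show that the relevant large-deviation (Chernoff) rates of the incomplete model agree with those of the matched complete-Gaussian model up to an additive $o(\log n)$, so that the threshold constant $1$ is preserved. The plan is to work directly with the cumulant generating function of $S$,
\[
\Lambda(s)\;=\;\log\mathbb E\!\left[e^{sS}\right]\;=\;p\,\log\!\big(1-\theta_{ij}+\theta_{ij}\,e^{\,s\bar\mu_{ij}+s^2\bar\sigma_{ij}^2/2}\big),
\]
and, exploiting $\theta_{ij}\to 0$ together with $p\,\theta_{ij}=\Theta(\log n)\to\infty$, to show that over the range of tilting parameters $s$ relevant to the Chernoff bounds of Theorem~\ref{Theorem-1} the function $\Lambda(s)$ differs from the Gaussian log-moment generating function $s\,\mathbb E[S]+\tfrac12 s^2\mathrm{Var}(S)$ by $o(\log n)$, uniformly; the analogous comparison is needed for the higher-order aggregates. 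Granting this uniform comparison, the achievability part of Theorem~\ref{Theorem-1} (genie-aided maximum likelihood followed by a union bound over the $n$ nodes) and its converse (the argument producing an ambiguous node that no estimator can place correctly) transfer with $\muvec,\sigvec$ replaced by the effective parameters, which together with the first paragraph completes the proof.
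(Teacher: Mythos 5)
Your overall route is the same as the paper's: replace each aggregate edge-sum of the incomplete model by a Gaussian with the matched mean $\mu_{ij}=p_i\bar\mu_{ij}\theta_{ij}$ and variance $\Sigma_{ij}=p_i\theta_{ij}[\bar\sigma_{ij}^2+(1-\theta_{ij})\bar\mu_{ij}^2]$, note that these parameters are $O(\log n)$ so only the second regime of Theorem~\ref{Theorem-1} is relevant, and then invoke that theorem. The difference lies in how the substitution is justified. The paper writes the exact density of the aggregate as the binomial mixture~\eqref{equ:8} of convolved Gaussians and then asserts, on the strength of a numerical comparison (Figure~\ref{fig1}) and the side condition $\mu/\sigma\leq 4$, that this density is ``well-enough approximated'' by the matched Gaussian; it never addresses error exponents. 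You correctly identify that this is the crux: what must be preserved for an exact-recovery threshold is the Chernoff exponent $\Gdiv$ up to $o(\log n)$, and a distributional (CLT-type) approximation of the aggregate does not deliver that. In this respect your write-up is more honest about what a proof requires than the paper's own argument.

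That said, the step you defer --- showing that $\Lambda(s)=p\log\bigl(1-\theta+\theta e^{s\bar\mu+s^2\bar\sigma^2/2}\bigr)$ agrees with $s\,\mathbb{E}[S]+\tfrac12 s^2\mathrm{Var}(S)$ up to $o(\log n)$ uniformly over the relevant tilting range --- is not merely unfinished; as stated it fails for general fixed $\bar\mu$. With $\theta\to 0$ and $p\theta=\Theta(\log n)$ one has $\Lambda(s)=p\theta\bigl(e^{s\bar\mu+s^2\bar\sigma^2/2}-1\bigr)+O(p\theta^2)$, and the cubic and higher powers of $s\bar\mu$ in the exponential each contribute $\Theta(p\theta)=\Theta(\log n)$, so the higher cumulants of $S$ shift the exponent by a constant factor. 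The true threshold is therefore governed by the Chernoff divergence of the actual Bernoulli--Gaussian mixtures, not by $\Gdiv$ evaluated at the matched first two moments. This is precisely the gap the paper glosses over with the $\mu/\sigma\leq 4$ condition and the figure (under which the discrepancy is numerically small but still not $o(\log n)$). So neither your proposal nor the paper contains a complete proof of the corollary as an if-and-only-if statement; your plan at least localizes where the repair must happen, namely in defining the effective semi-metric through the mixture's cumulant generating function rather than through its mean and variance alone.
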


\section{Proofs}

At each node, our problem is equivalent to testing a hypothesis $H$ indicating which community the node belongs to, out of the set of $K$ communities. 
In our setting, this is a Bayesian problem with prior $\mathbb P(H=i) = \rho_i$. For each node, let $\W$ be a random vector with entries $W_i$ representing the summation of edge weights connecting a node of interest to nodes in community $i$.



Assume that all node labels are revealed except for one, whose community membership $H$ is to be derived based on an observation of $\W$. The maximum a posteriori estimator (MAP) is
\begin{align*}
     \argmax_{i} \rho_i \Dist(\w | H=i)  . 
\end{align*}
A simple comparison can eliminate a candidate, i.e, if
\begin{align}
\label{equ:1}
     \rho_i \Dist(\w | H=i) < \rho_k \Dist(\w | H=k) ,
\end{align}
then $H \neq i$. Therefore, a set of pairwise comparisons of the hypotheses reveals the MAP. Assume that the true hypothesis is $H = i$. Denote by $\bad(i,k)$ the region of $\W$ for which \eqref{equ:1} is satisfied, i.e.,  $H=i$ has a worse metric compared with $H = k$. Also denote by $\bad(i)$ the region for $\W$ where the overall MAP estimator is in error. Then the probability of error is
\begin{align*}
    P_e = &\sum_{i } \rho_i \Prob(\W \in \bad(i) | H=i) .
\end{align*}
Since $\bad(i) \subset \bigcup_{k=1}^{K} \bad(i,k)$, 
\begin{align}
\label{equ:4}
    P_e  \leq \sum_{i } \sum_{k, k \neq i} \rho_i \Prob(\W \in \bad(i,k) | H=i) .
\end{align}
Define
\begin{align*}
    I(\w,i,k) \triangleq \min &\{ \rho_i\Dist(\w | H=i) , \rho_k \Dist(\w | H=k) \},
\end{align*}
and note that 
\begin{align}
\label{equ:3}
    I(\w,i,k) =\begin{cases}
    \rho_i \Dist(\w | H=i)  & \text{when} ~\W \in \bad(i,k) \\
    \rho_k \Dist(\w | H=k)   & \text{when} ~\W \in  \bad^c(i,k)
    \end{cases} .
\end{align}
Substituting \eqref{equ:3} into \eqref{equ:4}, 
\begin{align}
\label{equ:2}
    P_e  \leq \int \sum_{i } \sum_{k > i}I(\w,i,k) \text{d}\w.
\end{align}
The error is bounded from below by
\begin{align}
    \label{equ:5}
    P_e  \geq \frac{1}{K-1} 
    \int \sum_{i } \sum_{k > i} I(\w,i,k) \text{d}\w , 
\end{align}
because
\begin{align*}
    \sum_{\substack{k  \\ k \neq i}} \Prob(\W \in \bad(i,k) | H=i) \leq (K-1) \Prob(\W \in \bad(i) | H=i) .
\end{align*}
Therefore, the error probability is bounded by controlling
\begin{align}
\label{equ:6}
    \int I(\w,i,k) \, \text{d}\w.
\end{align}

\subsection{Proof of Theorem~\ref{Theorem-1}}
For a node in community $i$, the edge sums $W_j$ are distributed according to $\mathcal{N}(p_j \bar{\mu}_{ji}, p_j  \bar{\sigma}_{ji}^2)$, and are independent of each other. We collect these edge sums into the vector $\W$, which obeys a multivariate Gaussian distribution with mean denoted $\muvec_{i}$ and covariance matrix  $\diag (\sigvec_{i})$. Then
\begin{align*}
    f(\w; \muvec_{i}, \sigvec_{i}) &\triangleq \Dist\left ( \w| H = i\right) \\
    &= \prod_{k=1}^{K} \frac{1}{\sqrt{2\pi \sigmascalar{k}{i}}} \exp \left( -\frac{(w_{k}- \muscalar{k}{i})^2}{2 \sigmascalar{k}{i}} \right), 
\end{align*}
where $\muscalar{k}{i} = p_k \bar{\mu}_{ki}$ and $\sigmascalar{k}{i} = p_k  \bar{\sigma}_{ki}^{2}$.

\begin{Lemma}
\label{lemma: Gaussian-div}
Let
$\mutilde, \muhat \in \mathbb{R}^{K}$,  $\Sigmatilde, \Sigmahat \in \mathbb{R}_{++}^K$, and $\rhotilde, \rhohat \in \mathbb{R}_{++}$.
If either $\mutilde \neq \muhat$ or $\Sigmatilde \neq \Sigmahat$, then 
\begin{align*}
    \int_{\mathbb{R}^{K}} &\min \{ \rhotilde \,f(\w; {\mutilde}, {\Sigmatilde}) , \rhohat\, f(\w; \muhat, {\Sigmahat}) \} \text{d}\w \\
    & \leq e^{-\Gdiv(\mutilde,\muhat, \Sigmatilde, \Sigmahat) + c_1} , \\
    \int_{\mathbb{R}^{K}} &\min \{ \rhotilde f(\w; \mutilde, \Sigmatilde) , \rhohat f(\w; \muhat, \Sigmahat) \} \text{d}\w \\
    &\geq  e^{-\Gdiv(\mutilde,\muhat, \Sigmatilde,
    \Sigmahat) +c_2 }  ,
\end{align*}
where $c_1$ and $c_2$ are some constants.
\end{Lemma}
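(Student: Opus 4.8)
\noindent\emph{Proof proposal.}
Abbreviate $f_{0}(\w)=f(\w;\mutilde,\Sigmatilde)$ and $f_{1}(\w)=f(\w;\muhat,\Sigmahat)$. The upper bound rests on the elementary inequality $\min\{a,b\}\le a^{t}b^{1-t}$ ($a,b\ge 0$, $t\in[0,1]$); the lower bound rests on a change of measure to the Chernoff tilting of $f_{0},f_{1}$. For the upper bound, the plan is to use, for every $t\in[0,1]$,
\begin{align*}
\int_{\mathbb{R}^{K}}\!\min\{\rhotilde f_{0}(\w),\rhohat f_{1}(\w)\}\,\text{d}\w
\ \le\ \rhotilde^{\,t}\,\rhohat^{\,1-t}\!\int_{\mathbb{R}^{K}}\! f_{0}(\w)^{t}\, f_{1}(\w)^{1-t}\,\text{d}\w .
\end{align*}
Since each $f_{i}$ is a product of $K$ univariate Gaussian densities, the right-hand integral factors across coordinates, and each factor $\int_{\mathbb{R}}\mathcal{N}(x;\Tilde{\mu}_{k},\Tilde{\Sigma}_{k})^{t}\,\mathcal{N}(x;\hat{\mu}_{k},\hat{\Sigma}_{k})^{1-t}\,\text{d}x$ is evaluated in closed form by completing the square in the exponent. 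A direct calculation shows its negative logarithm equals exactly the $k$-th summand in the definition of $\Gdiv$; summing over $k$, $\int f_{0}^{t}f_{1}^{1-t}\,\text{d}\w=e^{-S(t)}$, where $S(t)$ is the bracketed sum appearing in $\Gdiv$, so $\max_{t\in[0,1]}S(t)=\Gdiv(\mutilde,\muhat,\Sigmatilde,\Sigmahat)$. Minimizing $\rhotilde^{\,t}\rhohat^{\,1-t}e^{-S(t)}$ over $t$ yields $e^{-\Gdiv}$ times a factor depending only on $\rhotilde,\rhohat\in(0,1)$, absorbed into $c_{1}$.

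For the lower bound, let $t^{*}$ attain $\max_{t}S(t)$. Because $S(t)=-\log\int f_{0}^{t}f_{1}^{1-t}\,\text{d}\w$ is nonnegative on $[0,1]$, vanishes at $t=0,1$, and --- by the hypothesis $\mutilde\neq\muhat$ or $\Sigmatilde\neq\Sigmahat$ --- is not identically zero, we get $t^{*}\in(0,1)$. Let $r(\w)\propto f_{0}(\w)^{t^{*}}f_{1}(\w)^{1-t^{*}}$; this is again a product of Gaussians, with normalizing constant $Z=e^{-\Gdiv}$ by the computation above. Writing $L(\w)=\log\!\big(f_{0}(\w)/f_{1}(\w)\big)$, a short rewriting gives
\begin{align*}
\int_{\mathbb{R}^{K}}\!\min\{\rhotilde f_{0}(\w),\rhohat f_{1}(\w)\}\,\text{d}\w
\ =\ e^{-\Gdiv}\,\mathbb{E}_{r}\!\big[\,\min\{\rhotilde\,e^{(1-t^{*})L},\ \rhohat\,e^{-t^{*}L}\}\,\big].
\end{align*}
On the event $\{|L|\le M\}$ the integrand is at least $\min\{\rhotilde,\rhohat\}\,e^{-M}$, so it suffices to bound $\Prob_{r}(|L|\le M)$ below by a constant for a suitable $M$. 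Under $r$ the variable $L$ has mean $\psi'(t^{*})=0$ (this uses interiority of $t^{*}$) and variance $\psi''(t^{*})$, where $\psi(t)=-S(t)=\log\int f_{0}^{t}f_{1}^{1-t}\,\text{d}\w$ is the convex cumulant generating function of $L$ under $f_{1}$; non-degeneracy of $f_{0},f_{1}$ forces $0<\psi''(t^{*})<\infty$, so Chebyshev's inequality with $M$ a fixed multiple of $\sqrt{\psi''(t^{*})}$ finishes the lower bound, giving $e^{-\Gdiv+c_{2}}$.

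The Gaussian Chernoff integral is routine. The crux is the lower bound, and specifically the assertion that the fluctuation scale $M\asymp\sqrt{\psi''(t^{*})}$ can be folded into a genuine constant $c_{2}$: this is immediate when $\Gdiv$ stays bounded, but when $\Gdiv$ grows with $n$ one actually incurs a square-root-order correction in the exponent, which --- though not literally constant --- is harmless for Theorems~\ref{Theorem-1}--\ref{Theorem-2} since it does not perturb the $\log n$ threshold. A secondary, mechanical task is to isolate the degenerate cases (only the means differ, only the variances differ, or some coordinates coincide) and check in each that $t^{*}$ is interior and $\psi''(t^{*})>0$.
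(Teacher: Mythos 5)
Your upper bound is the paper's argument in different notation: the paper's step $\min\{f_0,f_1\}=g(t)\min\{g_1(t),g_2(t)\}\le g(t)$ (its Lemma~\ref{lemma2}) is exactly your inequality $\min\{a,b\}\le a^{t}b^{1-t}$, followed by the same closed-form evaluation of $\int f_0^{t}f_1^{1-t}\,\mathrm{d}\w=e^{-\sum_k D_k(t)}$ and optimization over $t$, so there is nothing to add there. Your lower bound, however, is a genuinely different route. The paper stays at $t^{*}$, replaces $\min\{g_1(t^{*}),g_2(t^{*})\}$ by $g_2(\cdot,\taustar)$, and disposes of the residual integral $\int\prod_k\mathcal{N}(w_k;\mu_k(t^{*}),\sigma_k^2(t^{*}))\,g_2(w_k,\taustar)\,\mathrm{d}w_k$ via a mean-value step plus the unproved assertion that $g_2(w_k^{*},\taustar)$ ``is a positive constant''; you instead tilt to $r\propto f_0^{t^{*}}f_1^{1-t^{*}}$, use interiority of $t^{*}$ (which the hypothesis $\mutilde\neq\muhat$ or $\Sigmatilde\neq\Sigmahat$ guarantees, by strict H\"older) to get $\mathbb{E}_r[L]=\psi'(t^{*})=0$, and control $L$ by Chebyshev. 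This is the standard, more rigorous way to prove such lower bounds, and it buys an explicit error term: as you correctly observe, the price is $e^{-\Gdiv-O(\sqrt{\psi''(t^{*})})}$ rather than $e^{-\Gdiv+c_2}$ with a true constant, and $\psi''(t^{*})$ does grow with the parameters in the application (where $\mutilde=\muvec_i$, $\Sigmatilde=\sigvec_i$ scale linearly in $n$ and $\psi''(t^{*})$ is of the same order as $\Gdiv$). The paper's proof does not actually do better --- the same parameter dependence is hidden inside the ``positive constant'' claim, and its $\taustar$ in fact depends on $\w$ since which of $g_1(t^{*}),g_2(t^{*})$ is smaller varies with $\w$ --- and an exact computation for two equal-variance univariate Gaussians gives $e^{-\Gdiv-\frac12\log\Gdiv+O(1)}$, confirming that the correction is genuinely non-constant in the growing-parameter regime. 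Your closing remark is the right resolution: $\sqrt{\Gdiv}=o(\Gdiv)$ and $\sqrt{O(\log n)}=o(\log n)$, so neither threshold in Theorems~\ref{Theorem-1}--\ref{Theorem-2} is perturbed. The only loose end is the deferred degeneracy check, which is two lines: coordinates with $\Tilde{\mu}_k=\hat{\mu}_k$ and $\Tilde{\Sigma}_k=\hat{\Sigma}_k$ contribute nothing to $L$ or to $\psi''$, and the hypothesis supplies at least one non-degenerate coordinate, so $0<\psi''(t^{*})<\infty$ always holds.
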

\begin{proof}
Define 
\begin{align*}
    g_{1}(t) &\triangleq \left( \frac{f(\w; {\mutilde}, {\Sigmatilde})}{f(\w; \muhat, {\Sigmahat})}\right)^{1-t},\\
    g_{2}(t) & \triangleq \left( \frac{f(\w; \muhat, {\Sigmahat})}{f(\w; {\mutilde}, {\Sigmatilde})}\right)^{t},\\
    g(t) &\triangleq  f(\w; {\mutilde}, {\Sigmatilde})^{t} f(\w; \muhat, {\Sigmahat})^{1-t} ,
\end{align*}
in which the dependence of $g_1(t)$, $g_2(t)$, and $g(t)$ on $\w$ is suppressed for notational convenience. 
Note that $g(t)$ can be restated as
\begin{align*}
    g(t) =  e^{-\sum_{k=1}^{K}D_k(t)} \prod_{k=1}^{K}  \frac{1}{\sqrt{2\pi \sigma_k^2(t) }}
    \exp \left( -\frac{(w_{k}-\mu_k(t))^2}{2\sigma_k^2(t)} \right), 
\end{align*}
where 
\begin{align*}
    \mu_k(t) \triangleq& \frac{\Tilde{\mu}_{k}\hat{\Sigma}_{k}t+\hat{\mu}_{k}\Tilde{\Sigma}_{k} (1-t)}{\hat{\Sigma}_{k}t+\Tilde{\Sigma}_{k} (1-t)} , \\
    \sigma_k^2(t) \triangleq& \frac{\Tilde{\Sigma}_{k} \hat{\Sigma}_{k}}{\hat{\Sigma}_{k}t+\Tilde{\Sigma}_{k} (1-t)} ,\\
    D_k(t) \triangleq& \frac{\Tilde{\mu}_{k}^{2}\hat{\Sigma}_{k}t+\hat{\mu}_{k}^{2} \Tilde{\Sigma}_{k} (1-t)}{2 \Tilde{\Sigma}_{k} \hat{\Sigma}_{k}} -\frac{\left[ \Tilde{\mu}_{k}\hat{\Sigma}_{k}t+\hat{\mu}_{k}\Tilde{\Sigma}_{k} (1-t) \right ]^{2}}{2 \Tilde{\Sigma}_{k} \hat{\Sigma}_{k} \left[\hat{\Sigma}_{k}t+\Tilde{\Sigma}_{k} (1-t) \right ]} \\
    &-\frac{1}{2}\log \left(\frac{\Tilde{\Sigma}_{k}^{1-t} \hat{\Sigma}_{k}^{t} }{\Tilde{\Sigma}_{k}(1-t) +\hat{\Sigma}_{k}t} \right ).
\end{align*}

\begin{Lemma}
\label{lemma2}
For any $t\in [0,1]$, $\min \{ g_{1}(t), g_{2}(t) \} \leq 1$. 
\end{Lemma}
\begin{proof}
Both $g_{1}(t)$ and $g_{2}(t)$ are monotonic and $\frac{g_2(t)}{g_1(t)}$ is a positive constant (does not depend on $t$), thus $\min\{g_1(t),g_2(t)\}$ is also monotonic in $t$. Since $g_1(1)=g_2(0)=1$, for all $t$ we have:
\begin{align*}
\min \{ g_{1}(t), g_{2}(t) \} \leq 1. 
\end{align*}
\end{proof}

It can be shown that for any $t \in [0,1]$, 
\begin{align*}
    \int_{\mathbb{R}^{K}} &\min \{ \rhotilde f(\w; \mutilde, \Sigmatilde) , \rhohat f(\w; \muhat, \Sigmahat) \} \text{d}\w \\
    &\leq \max \{\rhotilde, \rhohat\} \int_{\mathbb{R}^{K}} \min \{ f(\w; \mutilde, \Sigmatilde), f(\w; \muhat, \Sigmahat)\} \text{d}\w \\
    &= \max \{\rhotilde, \rhohat\} \int_{\mathbb{R}^{K}} g(t) \min \{g_1(t), g_2(t)\} \text{d}\w \\
    &\leq \max \{\rhotilde, \rhohat\} e^{-\sum_{k=1}^{K}D_k(t)} ,
\end{align*}
where the last inequality holds due to Lemma~\ref{lemma2} and 
\begin{align*}
    \int_{\mathbb{R}^{K}} \prod_{k=1}^{K}  \frac{1}{\sqrt{2\pi \sigma_k^2(t) }}
    \exp \left( -\frac{(w_{k}-\mu_k(t))^2}{2\sigma_k^2(t)} \right) \text{d}\w =1 .
\end{align*}
When $t$ is chosen to minimize $e^{-\sum_{k=1}^{K}D_k(t)}$, 
\begin{align*}
    \int_{\mathbb{R}^{K}} &\min \{ \rhotilde f(\w; \mutilde, \Sigmatilde) , \rhohat f(\w; \muhat, \Sigmahat) \} \text{d}\w \\
    &\leq \max \{\rhotilde, \rhohat\} e^{-\Gdiv(\mutilde,\muhat, \Sigmatilde, \Sigmahat)} .
\end{align*}

To prove the second half, note that 
\begin{align}
\label{equ:7}
    \min \{g_1(t^*), g_2(t^*)\} = g_2(\taustar), 
\end{align}
where $\taustar \triangleq t^*$ if $\min \{g_1(t^*), g_2(t^*)\} = g_2(t^*)$; Otherwise $\taustar \triangleq  t^*-1$. Hence, at $t^*$,
\begin{align*}
    \int_{\mathbb{R}^{K}} &\min \{ \rhotilde f(\w; \mutilde, \Sigmatilde) , \rhohat f(\w; \muhat, \Sigmahat) \} \text{d}\w \\
    \geq& \min \{\rhotilde, \rhohat\} \int_{\mathbb{R}^{K}} \min \{ f(\w; \mutilde, \Sigmatilde), f(\w; \muhat, \Sigmahat)\} \text{d}\w \\
    =& \min \{\rhotilde, \rhohat\} \int_{\mathbb{R}^{K}} g(t^*) \min \{g_1(t^*), g_2(t^*)\} \text{d}\w \\
    =& \min \{\rhotilde, \rhohat\} e^{-\Gdiv(\mutilde,\muhat, \Sigmatilde, 
    \Sigmahat)} \\
    &\times \int_{\mathbb{R}^{K}} \prod_{k=1}^{K}   \frac{1}{\sqrt{2\pi \sigma_k^2(t^*)}} e^{ - \frac{(w_{k}-\mu_k(t^*))^2}{2\sigma_k^2(t^*)}  } g_2(w_k, \taustar) \text{d}\w ,
\end{align*}
where 
\begin{align*}
    g_2(w_k, \taustar)  = \left ( \frac{\Tilde{\Sigma}_k}{\hat{\Sigma}_k} \right)^{\frac{\taustar}{2}} e^{-\left[ \frac{(w_{k}-\hat{\mu}_k)^{2}}{2\hat{\Sigma}_k} -\frac{(w_{k}-\Tilde{\mu}_k)^{2}}{2\Tilde{\Sigma}_k} \right ] \taustar } . 
\end{align*}
Since $g_2(w_k, \taustar)$ is a non-negative and integrable function of $w_k$, applying a generalized variant of the mean value Theorem, there exists $w_k^*$ such that
\begin{align*}
    \int_{\mathbb{R}}  \frac{1}{\sqrt{2\pi \sigma_k^2(t^*)}} e^{ - \frac{(w_{k}-\mu_k(t^*))^2}{2\sigma_k^2(t^*)}  } g_2(w_k, \taustar) \text{d}w_k = g_2(w_k^*, \taustar) .
\end{align*}
It can be shown that at $\taustar$, $g_2(w_k^*, \taustar)$ is a positive constant. 
Therefore, 
\begin{align*}
    \int_{\mathbb{R}^{K}} &\min \{ \rhotilde f(\w; \mutilde, \Sigmatilde) , \rhohat f(\w; \muhat, \Sigmahat) \} \text{d}\w \\
    \geq& \min \{\rhotilde, \rhohat\} e^{-\Gdiv(\mutilde,\muhat, \Sigmatilde,
    \Sigmahat) +c } ,
\end{align*}
where $c$ is a constant. 
\end{proof}
Using Lemma~\ref{lemma: Gaussian-div} and the bounds~\eqref{equ:2} and~\eqref{equ:5}, 
\begin{align*}
    &P_e \leq e^{-\Gdiv(\muvec_{i},\muvec_{j}, \sigvec_{i}, \sigvec_{j}) + c_1} , \\
    &P_e \geq e^{-\Gdiv(\muvec_{i},\muvec_{j}, \sigvec_{i},
    \sigvec_{j}) +c_2 }.
\end{align*}
When $\Gdiv(\muvec_{i}, \muvec_{j}, \sigvec_{i}, \sigvec_{j}) = \omega(\log n)$, as $n$ goes to infinity, exact recovery is possible if and only if
\begin{align*}
\min_{i,j, i\neq j} \Gdiv(\muvec_{i}, \muvec_{j}, \sigvec_{i}, \sigvec_{j}) > 0 .
\end{align*}
If $\muvec_{i}$ is close to $\muvec_{j}$ and $\sigvec_{i}$ is close to $\sigvec_{j}$, then $\Gdiv(\muvec_{i}, \muvec_{j}, \sigvec_{i}, \sigvec_{j}) = O(\log n)$. In this regime, exact recovery is possible if and only if
\begin{align*}
\min_{i,j, i\neq j} \lim_{n \rightarrow \infty} \frac{\Gdiv(\muvec_{i}, \muvec_{j}, \sigvec_{i}, \sigvec_{j})}{\log n}> 1  .
\end{align*}

\subsection{Proof of Theorem~\ref{Theorem-2}}
If the node of interest belongs to community $i $, $W_j$ is distributed according to $\Gammadis(p_j,\lambda_{ji})$. The vector $\boldsymbol W$ has independent Gamma entries with different means $p_j/\lambda_{ji}$. Under $H=i$, random variable $W$ is drawn from a multivariate Gamma distribution with shape parameter $\p$ and rate parameter $\lamvec_{i} \in \mathbb{R}_{++}^{K}$. Then
\begin{align*}
    f(\w; \p, \lamvec_{i}) \triangleq \Dist \left ( \w| H = i\right) &=  \prod_{k=1}^{K} \frac{\lambdascalar{k}{i}^{p_k}}{\Gamma(p_k)} w^{p_k-1}_{k} e^{-\lambdascalar{k}{i}w_{k}}.
\end{align*}

\begin{Lemma}
\label{lemma: Exponential-div}
Let
$\lambdatilde, \lambdahat \in \mathbb{R}_{++}^{K}$, $\p \in \mathbb{R}_{++}^K$, and $\rhotilde, \rhohat \in \mathbb{R}_{++}$.
If $\lambdatilde \neq \lambdahat$,
\begin{align*}
    \int_{\mathbb{R}^{K}_{+}} &\min \{ \rhotilde f(\w; \p, \lambdatilde) , \rhohat f(\w; \p, \lambdahat) \} \text{d}\w \leq  e^{-\Ediv(\lambdatilde,\lambdahat, \p) +c_1 }, \\
    \int_{\mathbb{R}^{K}_{+}} &\min \{ \rhotilde f(\w; \p, \lambdatilde) , \rhohat f(\w; \p, \lambdahat) \} \text{d}\w \geq e^{-\Ediv(\lambdatilde,\lambdahat, \p) +c_2 } ,
\end{align*}
where $c_1$ and $c_2$ are some constants.
\end{Lemma}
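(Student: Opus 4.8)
The plan is to run the argument of Lemma~\ref{lemma: Gaussian-div} essentially unchanged, with the product of Gamma densities $f(\w;\p,\lambdatilde),f(\w;\p,\lambdahat)$ in place of the product of Gaussians. Define $g_1(t)\triangleq\big(f(\w;\p,\lambdatilde)/f(\w;\p,\lambdahat)\big)^{1-t}$, $g_2(t)\triangleq\big(f(\w;\p,\lambdahat)/f(\w;\p,\lambdatilde)\big)^{t}$, and $g(t)\triangleq f(\w;\p,\lambdatilde)^{t}f(\w;\p,\lambdahat)^{1-t}$. The first step is the Gamma analogue of the rewriting of $g(t)$ used for $\Gdiv$: a geometric mixture of two Gamma densities with common shape $p_k$ and rates $\tilde\lambda_k,\hat\lambda_k$ is proportional to a Gamma density with shape $p_k$ and rate $\lambda_k(t)\triangleq t\tilde\lambda_k+(1-t)\hat\lambda_k$, so that
\[
g(t)=e^{-\sum_{k=1}^{K}D_k(t)}\prod_{k=1}^{K}\frac{\lambda_k(t)^{p_k}}{\Gamma(p_k)}\,w_k^{p_k-1}e^{-\lambda_k(t)w_k},\qquad D_k(t)=p_k\log\frac{\lambda_k(t)}{\tilde\lambda_k^{\,t}\hat\lambda_k^{\,1-t}} .
\]
One checks that $\lambda_k(t)>0$ on $[0,1]$ because $\lambdatilde,\lambdahat\in\mathbb{R}_{++}^{K}$; that each $D_k(\cdot)$ is concave with $D_k(0)=D_k(1)=0$; and that $\sum_k D_k(t)$ is exactly the quantity inside the maximum defining $\Ediv(\lambdatilde,\lambdahat,\p)$ after relabelling $t\mapsto 1-t$, so that $\max_{t\in[0,1]}\sum_k D_k(t)=\Ediv(\lambdatilde,\lambdahat,\p)$ and, when $\lambdatilde\neq\lambdahat$, the maximiser $t^{*}$ is interior to $(0,1)$.

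For the upper bound, note that $g_1,g_2$ are monotone in $t$, that $g_2(t)/g_1(t)=f(\w;\p,\lambdahat)/f(\w;\p,\lambdatilde)$ does not depend on $t$, and that $g_1(1)=g_2(0)=1$; hence the argument proving Lemma~\ref{lemma2} applies verbatim and $\min\{g_1(t),g_2(t)\}\le1$ for all $t$. Using the pointwise identity $\min\{f(\w;\p,\lambdatilde),f(\w;\p,\lambdahat)\}=g(t)\min\{g_1(t),g_2(t)\}$ together with $\int_{\mathbb{R}^{K}_{+}}\prod_k\frac{\lambda_k(t)^{p_k}}{\Gamma(p_k)}w_k^{p_k-1}e^{-\lambda_k(t)w_k}\,\text{d}\w=1$ gives
\[
\int_{\mathbb{R}^{K}_{+}}\min\{\rhotilde f(\w;\p,\lambdatilde),\rhohat f(\w;\p,\lambdahat)\}\,\text{d}\w\le\max\{\rhotilde,\rhohat\}\,e^{-\sum_{k}D_k(t)},
\]
and minimising the right-hand side over $t$ yields the first inequality with $c_1=\log\max\{\rhotilde,\rhohat\}$.

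For the lower bound---the main step---I would reproduce the device from Lemma~\ref{lemma: Gaussian-div}. Evaluate the identity at the optimal $t^{*}$, bound $\min\{\rhotilde f,\rhohat f\}\ge\min\{\rhotilde,\rhohat\}\min\{f,f\}$, and write $\min\{g_1(t^{*}),g_2(t^{*})\}=g_2(\taustar)$ with $\taustar\in\{t^{*},t^{*}-1\}$, so that what remains is $\min\{\rhotilde,\rhohat\}\,e^{-\Ediv(\lambdatilde,\lambdahat,\p)}$ times an integral of the product-of-Gammas density against $g_2(\cdot,\taustar)=\prod_k(\hat\lambda_k/\tilde\lambda_k)^{p_k\taustar}e^{(\tilde\lambda_k-\hat\lambda_k)w_k\taustar}$. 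Here the Gamma case is in fact more transparent than the Gaussian one: each one-dimensional factor is a shifted-exponential integral, finite because the residual rate $\lambda_k(t^{*})-(\tilde\lambda_k-\hat\lambda_k)\taustar$ equals $\hat\lambda_k$ (when $\taustar=t^{*}$) or $\tilde\lambda_k$ (when $\taustar=t^{*}-1$), both positive, and it is an explicit Gamma moment generating function. Applying a generalized mean value theorem as in Lemma~\ref{lemma: Gaussian-div} produces $w_k^{*}$ with the one-dimensional integral equal to $g_2(w_k^{*},\taustar)$, and one then checks that the product over $k$ of these correction factors is a positive constant, giving the second inequality with a constant $c_2$. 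An alternative, perhaps cleaner route is to bound $\int\min\{f(\w;\p,\lambdatilde),f(\w;\p,\lambdahat)\}\ge\Prob_{\lambdahat}\big(\log f(\w;\p,\lambdatilde)\ge\log f(\w;\p,\lambdahat)\big)$ and estimate this large-deviation probability directly by the same Chernoff tilting, $\pi^{*}=\prod_k\mathrm{Gamma}(w_k;p_k,\lambda_k(t^{*}))$.

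The hard part is this last step, for two reasons. First, the substitution $\min\{g_1(t^{*}),g_2(t^{*})\}=g_2(\taustar)$ is valid only region-by-region---$\taustar=t^{*}$ on $\{f(\w;\p,\lambdahat)\le f(\w;\p,\lambdatilde)\}$ and $\taustar=t^{*}-1$ on its complement---and for $K\ge2$ these regions are not product sets, so the reduction to one-dimensional integrals has to be carried out after, not before, the mean value step, or the two regions handled separately. Second, the ``constant'' correction factor is really of order $(\mathrm{Var}_{t^{*}})^{-1/2}$, where $\mathrm{Var}_{t^{*}}=\sum_k p_k(\tilde\lambda_k-\hat\lambda_k)^{2}/\lambda_k(t^{*})^{2}$ is the variance of the tilted log-likelihood ratio; in the regime $\Ediv=O(\log n)$ this is $\Theta(1/\sqrt{\log n})$, so it perturbs the exponent only by $o(\log n)$, which is immaterial for the thresholds of Theorem~\ref{Theorem-2} but should be kept in mind when asserting that $c_2$ is a genuine constant. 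Apart from this bookkeeping, the argument is a routine transcription of the Gaussian proof, and the two displayed bounds then feed into \eqref{equ:2} and \eqref{equ:5} exactly as in the Gaussian case to give Theorem~\ref{Theorem-2}.
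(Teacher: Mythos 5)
Your proposal follows the paper's own proof essentially verbatim: the same geometric-mixture decomposition $g(t)=e^{-\sum_k D_k(t)}$ times a normalized Gamma density with rate $\lambda_k(t)$, the same use of Lemma~\ref{lemma2} for the upper bound, and the same $g_2(\taustar)$ substitution followed by a generalized mean value theorem for the lower bound (modulo the harmless relabelling $t\mapsto 1-t$). Your closing caveats---that the $g_2(\taustar)$ substitution is region-dependent for $K\ge 2$ and that the mean-value ``constant'' may in fact scale with the tilted variance---are legitimate points the paper itself glosses over with ``it can be shown,'' but they do not alter the approach.
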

\begin{proof}
Define 
\begin{align*}
    g_{1}(t) &\triangleq \left( \frac{f(\w; \p, \lambdatilde)}{f(\w; \p, \lambdahat)} \right)^{1-t} ,\\
    g_{2}(t) & \triangleq \left( \frac{f(\w; \p, \lambdahat)}{f(\w; \p, \lambdatilde)} \right)^{t} ,\\
    g(t) &\triangleq  f(\w; \p, \lambdatilde)^{t} f(\w; \p, \lambdahat)^{1-t} ,
\end{align*}
in which the dependence of $g_1(t)$, $g_2(t)$, and $g(t)$ on $\w$ is suppressed for notational convenience.
Notice that Lemma~\ref{lemma2} holds also in this case. 
For any $t \in [0,1]$, 
\begin{align*}
    \int_{\mathbb{R}^{K}_{+}} &\min \{ \rhotilde f(\w; \p, \lambdatilde) , \rhohat f(\w; \p, \lambdahat) \} \text{d}\w \\
    &\leq \max \{\rhotilde, \rhohat\} \int_{\mathbb{R}^{K}_{+}} \min \{ f(\w; \p, \lambdatilde), f(\w; \p, \lambdahat)\} \text{d}\w \\
    &= \max \{\rhotilde, \rhohat\} \int_{\mathbb{R}^{K}_{+}} g(t) \min \{g_1(t), g_2(t)\} \text{d}\w \\
    &\leq \max \{\rhotilde, \rhohat\} e^{-\sum_{k=1}^{K} p_k \log \left( \frac{(1-t)\Tilde{\lambda}_{k}+t\hat{\lambda}_{k}}{\Tilde{\lambda}_{k}^{1-t} \hat{\lambda}_{k}^{t}} \right) } ,
\end{align*}
where the last inequality holds due to Lemma~\ref{lemma2} and 
\begin{align*}
    \int_{\mathbb{R}_{+}^{K}} \prod_{k=1}^{K} \frac{ \left(\lambda_k(t) \right )^{p_k}}{\Gamma(p_k)} w^{p_k-1}_{k} e^{-w_{k} \lambda_k(t) }  \text{d}\w =1 ,
\end{align*}
where $ \lambda_k(t) \triangleq (1-t)\Tilde{\lambda}_{k}+t\hat{\lambda}_{k}$. 
When $t$ is chosen to maximize $\sum_{k=1}^{K} p_k \log \left( \frac{(1-t)\Tilde{\lambda}_{k}+t\hat{\lambda}_{k}}{\Tilde{\lambda}_{k}^{1-t} \hat{\lambda}_{k}^{t}} \right)$, 
\begin{align*}
    \int_{\mathbb{R}^{K}_{+}} &\min \{ \rhotilde f(\w; \p, \lambdatilde) , \rhohat f(\w; \p, \lambdahat) \} \text{d}\w \\
    &\leq \max \{\rhotilde, \rhohat\} e^{-\Ediv(\lambdatilde,\lambdahat, \p)}.
\end{align*}

Notice that~\eqref{equ:7} holds also in this case. 
Hence,  at $t^*$,
\begin{align*}
    \int_{\mathbb{R}^{K}_{+}} &\min \{ \rhotilde f(\w; \p, \lambdatilde) , \rhohat f(\w; \p, \lambdahat) \} \text{d}\w \\
    \geq& \min \{\rhotilde, \rhohat\} \int_{\mathbb{R}^{K}_{+}} \min \{ f(\w; \p, \lambdatilde), f(\w; \p, \lambdahat) \} \text{d}\w \\
    =& \min \{\rhotilde, \rhohat\} \int_{\mathbb{R}^{K}_{+}} g(t^*) \min \{g_1(t^*), g_2(t^*)\} \text{d}\w \\
    =& \min \{\rhotilde, \rhohat\} e^{-\Ediv(\lambdatilde,\lambdahat, \p)} \\
    &\times \int_{\mathbb{R}^{K}_{+}} \prod_{k=1}^{K}  \frac{\left( \lambda_k(t^*) \right )^{p_k}}{\Gamma(p_k)} (w_k)^{p_k-1} e^{-w_k \lambda_k(t^*) } g_2(w_k, \taustar) \text{d}\w , 
\end{align*}
where
\begin{align*}
    g_2(w_k, \taustar)  = \left ( \frac{\Tilde{\lambda}_k}{\hat{\lambda}_k} \right)^{ p_k \taustar} e^{-w_k \taustar (\Tilde{\lambda}_k-\hat{\lambda}_k)} . 
\end{align*}
Since $g_2(w_k, \taustar)$ is a non-negative and integrable function of $w_k$, applying a generalized variant of  mean value Theorem, there exists $w_k^*$ such that
\begin{align*}
    \int_{\mathbb{R}_{+}} \frac{1}{\sqrt{2\pi \sigma_k^2(t^*)}} e^{ - \frac{(w_{k}-\mu_k(t^*))^2}{2\sigma_k^2(t^*)}  } g_2(w_k, \taustar) \text{d}w_k = g_2(w_k^*, \taustar) .
\end{align*}
It can be shown that at $\taustar$, $g_2(w_k^*, \taustar)$ is a positive constant. 
Therefore,  
\begin{align*}
    \int_{\mathbb{R}^{K}_{+}} &\min \{ \rhotilde f(\w; \p, \lambdatilde) , \rhohat f(\w; \p, \lambdahat) \} \text{d}\w \\
    \geq& \min \{\rhotilde, \rhohat\} e^{-\Ediv(\lambdatilde,\lambdahat, \p) +c } ,
\end{align*}
where $c$ is a constant.
\end{proof}
Using Lemma~\ref{lemma: Exponential-div} and the bounds~\eqref{equ:2} and~\eqref{equ:5}, for some constants $c_1$ and $c_2$,
\begin{align*}
    &P_e \leq e^{-\Ediv(\lamvec_{i},\lamvec_{j}, \p) +c_1 } , \\
    &P_e \geq e^{-\Ediv(\lamvec_{i},\lamvec_{j}, \p)  +c_2 } .
\end{align*}
When $\Ediv(\lamvec_{i},\lamvec_{j}, \p) = \omega(\log n)$, as $n$ goes to infinity, exact recovery is possible if and only if
\begin{align*}
\min_{i,j, i\neq j} \Ediv(\lamvec_{i}, \lamvec_{j}, \p) > 0 .
\end{align*}
If $\lamvec_{i}$ is close to $\lamvec_{j}$, then $\Ediv(\lamvec_{i},\lamvec_{j}, \p) = O(\log n)$. In this regime, $\lim_{n \rightarrow \infty} \frac{\Gdiv(\muvec_{i}, \muvec_{j}, \sigvec_{i}, \sigvec_{j})}{\log n}$ is a constant and exact recovery is possible if and only if
\begin{align*}
\min_{i,j, i\neq j} \lim_{n \rightarrow \infty} \frac{\Ediv(\lamvec_{i}, \lamvec_{j}, \p)}{\log n}> 1  .
\end{align*}
\begin{figure}
\begin{center}
\begin{subfigure}{0.5\textwidth}
         \centering
         \includegraphics[width=\textwidth]{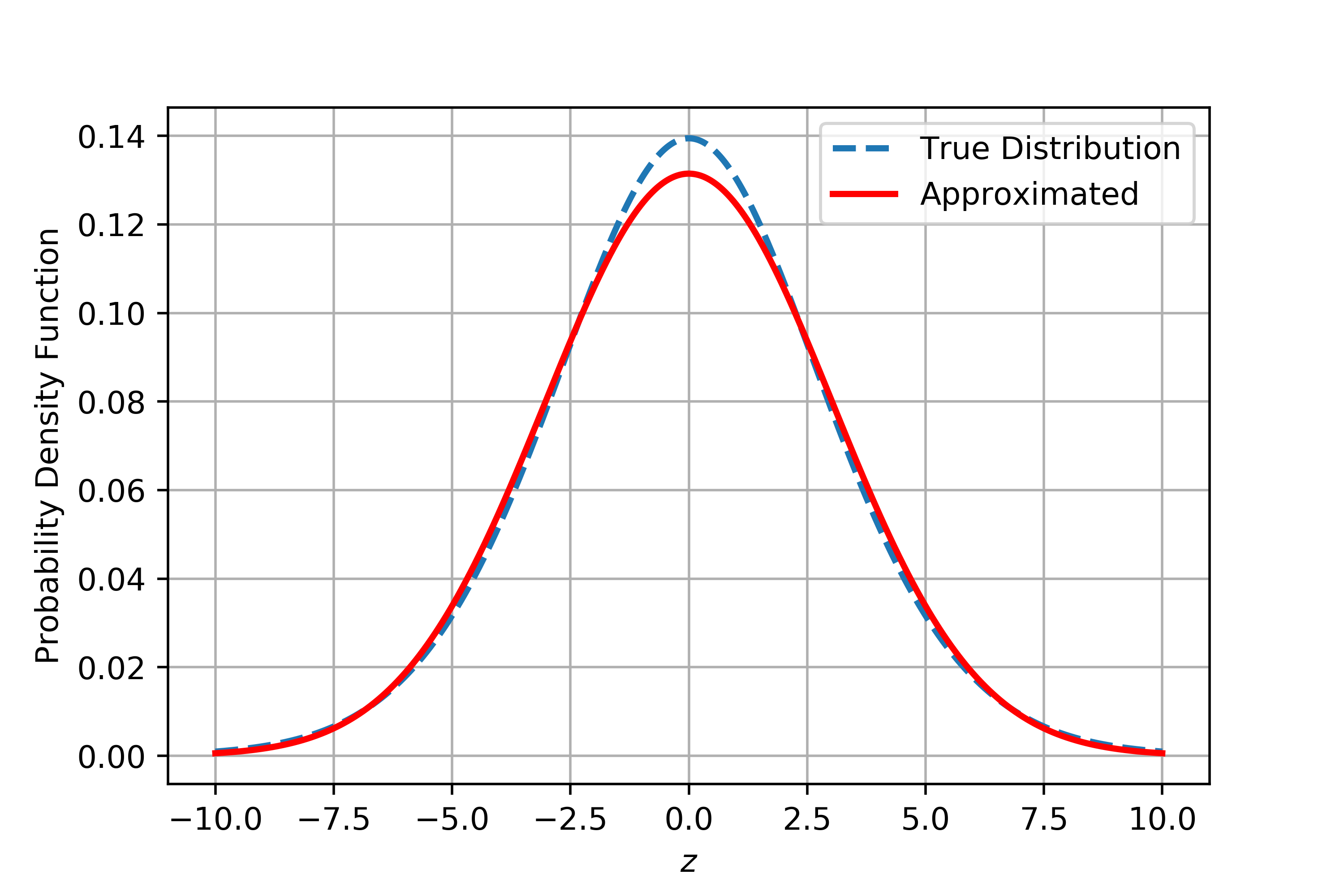}
         \caption{${\mu} = 0, {\sigma}^2 = 1$}
     \end{subfigure}
     \hfill
     \begin{subfigure}{0.5\textwidth}
         \centering
         \includegraphics[width=\textwidth]{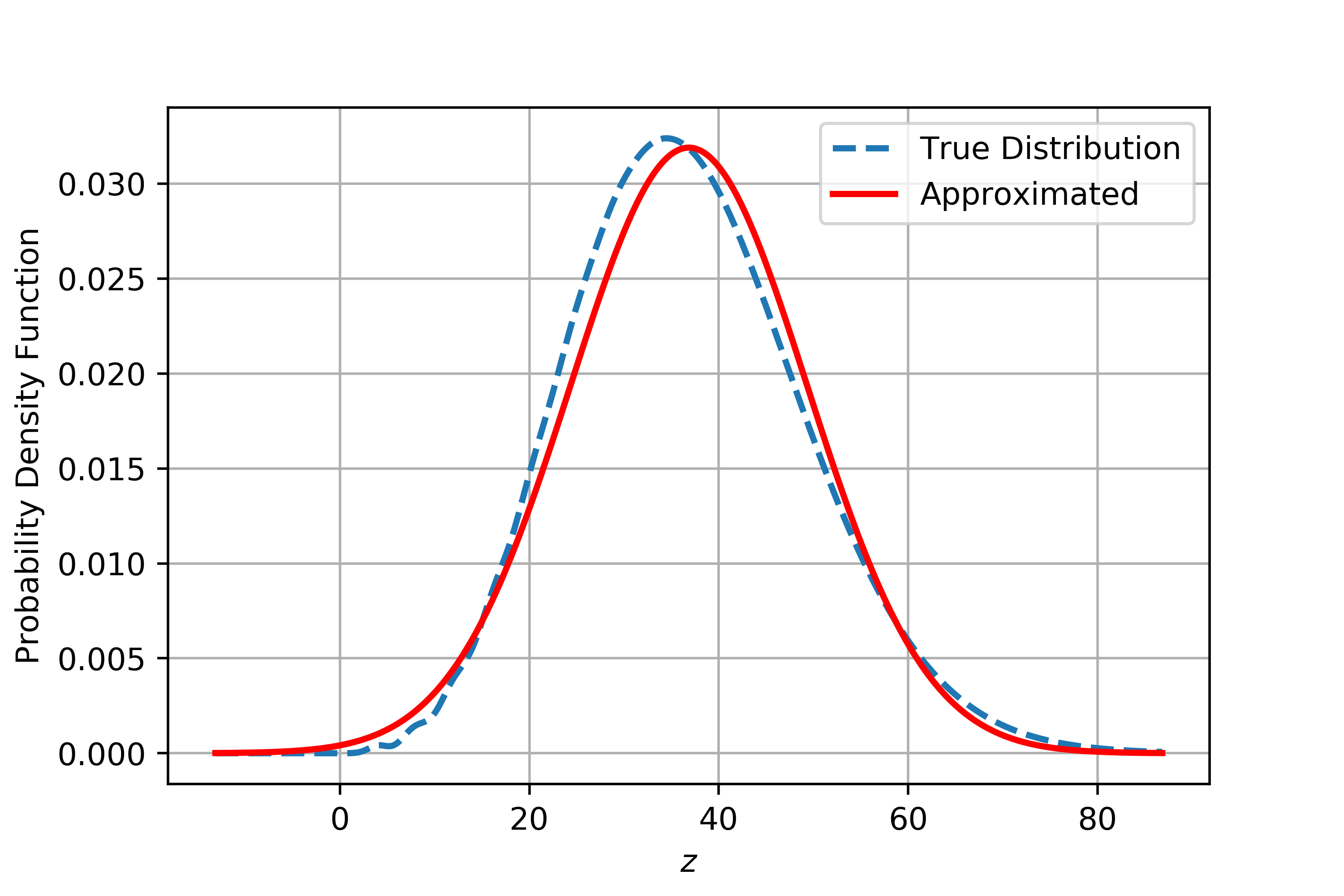}
         \caption{${\mu} = 4, {\sigma}^2 = 1$}
     \end{subfigure}
     \hfill
     \begin{subfigure}{0.5\textwidth}
         \centering
         \includegraphics[width=\textwidth]{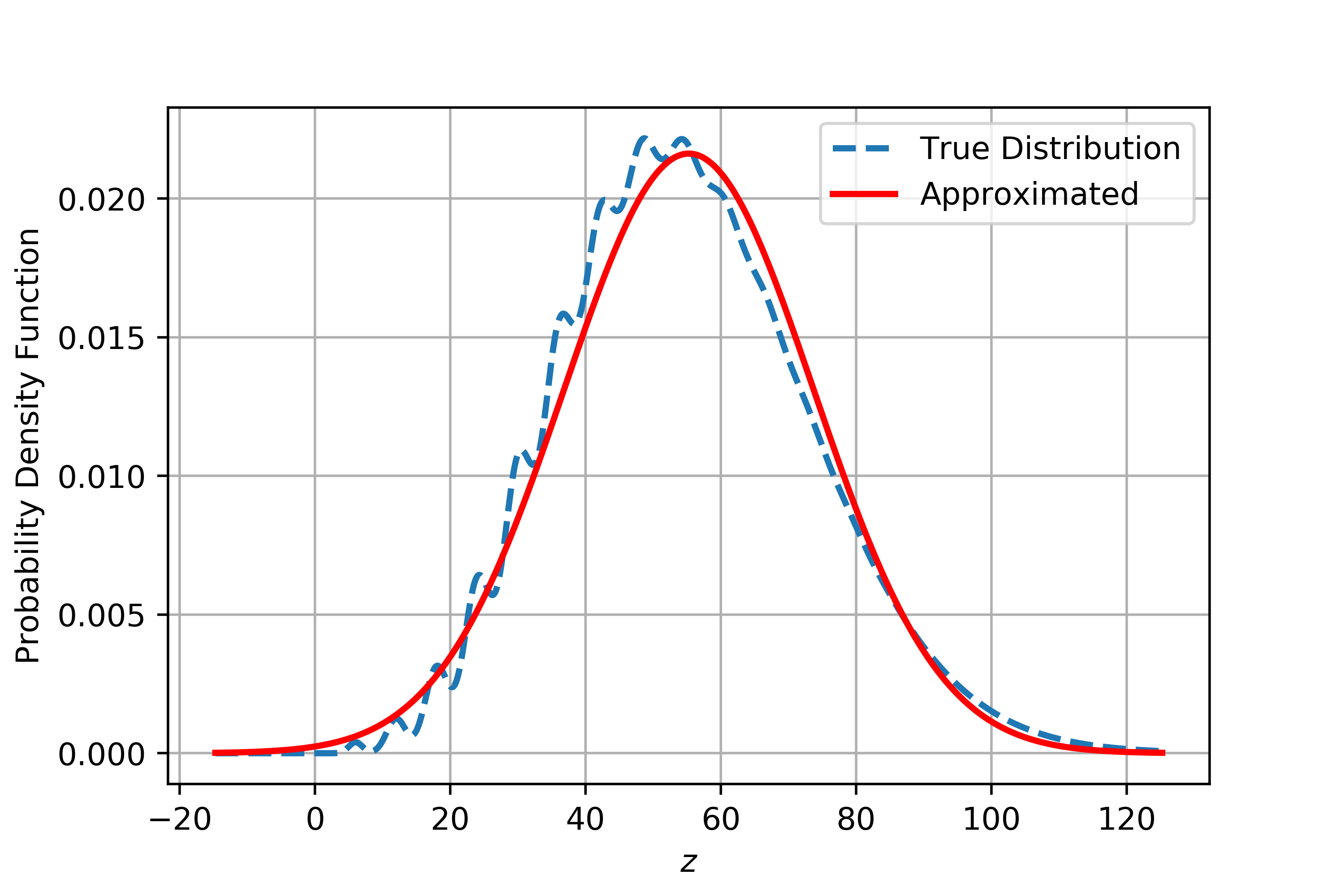}
         \caption{${\mu} = 6, {\sigma}^2 = 1$}
     \end{subfigure}
\end{center}
\caption{True distribution~\eqref{equ:8} and its  approximation for $\theta = \frac{\log n}{n}$, $n=10000$, and different values of $\frac{{\mu}}{{\sigma}}$.}
\label{fig1}
\end{figure}

\section{Incomplete but Fully Connected Weighted Graphs}
Let $X \sim \Bern(\theta)$ and $Y \sim \mathcal{N}(\mu,\sigma^2)$. Then $Z \triangleq XY$ is a random variable with probability density function
\begin{align*}
    f_Z(z) = \theta f_Y(z) + (1-\theta) \delta(z),
\end{align*}
where $f_Y(y)$ is the probability density function of $Y$ and $\delta(z)$ is Dirac delta function.
Then the probability density function of $ \sum_{i=1}^{n} Z_i$ is
\begin{align}
\label{equ:8}
      \sum_{i = 0}^{n} \binom{n}{i} \theta^i (1-\theta)^{n-i} \{ f_Y(z) \}^{\circledast i} \circledast \delta(z) ,
\end{align}
where $\circledast $ denotes the convolution operator.
In~\eqref{equ:8}, for each $i$, $\{ f_Y(z) \}^{\circledast i}$ is a Gaussian probability density function with mean $i\mu$ and variance of $i\sigma^2$. If $\theta$ is in order of $\frac{\log n}{n}$ and $\frac{\mu}{\sigma} \leq 4$, then
the probability density function~\eqref{equ:8} is well-enough approximated by a Gaussian distribution with mean $n \mu \theta$ and variance of $n\theta[\sigma^2+(1-\theta)\mu^2]$.
Figure~\ref{fig1} compares the probability density function~\eqref{equ:8} and its Gaussian approximation under the conditions mentioned above. 

Using this approximation and following Theorem~\ref{Theorem-1},
when $Q_{ij} = \mathcal{N}(\bar{\mu}_{ij}, \bar{\sigma}_{ij}^2)$,
exact recovery of node labels is possible if and only if
\begin{align*}
\min_{i,j, i\neq j} \lim_{n \rightarrow \infty} \frac{\Gdiv(\muvec_{i}, \muvec_{j}, \sigvec_{i}, \sigvec_{j})}{\log n}> 1  ,
\end{align*}
where
\begin{align*}
    & \mu_{ij}  = p_i \bar{\mu}_{ij} \theta_{ij} ,\\
    &\Sigma_{ij} = p_i \theta_{ij} [\bar{\sigma}_{ij}^2+(1-\theta_{ij})\bar{\mu}_{ij}^2] .
\end{align*}
\bibliographystyle{IEEEtran}
\bibliography{Ref}

\enlargethispage{-1.2cm} 
\IEEEtriggeratref{3}

\end{document}